
\documentclass[letterpaper, 10 pt, conference]{ieeeconf}  

\IEEEoverridecommandlockouts                              

\overrideIEEEmargins                                      



\usepackage{graphics} 
\usepackage{epsfig} 
\usepackage{mathptmx} 
\usepackage{times} 
\usepackage{amsmath} 
\usepackage{amssymb}  
\usepackage{amsthm}
\usepackage{amsfonts}
\newtheorem{theorem}{Theorem}
\newtheorem{assumption}{Assumption}
\newtheorem{definition}{Definition}

\newtheorem{remark}{Remark}

\usepackage{amsfonts,amsmath,amssymb,amsthm}
\usepackage{algorithm}
\usepackage{algpseudocode}
\usepackage{url}


\usepackage[acronym]{glossaries}
\newacronym{BHE}{BHE}{Bayesian Holonic Equilibrium}
\newacronym{scps}{SCPS}{Socio-Cyber-Physical Systems}

\title{\LARGE \bf
Bayesian Holonic Systems: Equilibrium, Uniqueness, and Computation
}

\author{Yunian Pan$^{1}$ and Quanyan Zhu$^{1}$
\thanks{*This work was not supported by any organization}
\thanks{$^{1}$The authors are with the Department of Electrical Engineering,  New York University, 370 Jay St, Brooklyn, NY 11201, USA
        {\tt\small \{yp1170, qz494\}@nyu.edu}}%
}

\begin{document}

\maketitle
\thispagestyle{empty}
\pagestyle{empty}

\begin{abstract}
This paper addresses the challenge of modeling and control in hierarchical, multi-agent systems, known as holonic systems, where local agent decisions are coupled with global systemic outcomes. We introduce the Bayesian Holonic Equilibrium (BHE), a concept that ensures consistency between agent-level rationality and system-wide emergent behavior. We establish the theoretical soundness of the BHE by showing its existence and, under stronger regularity conditions, its uniqueness. We propose a two-time scale learning algorithm to compute such an equilibrium. This algorithm mirrors the system's structure, with a fast timescale for intra-holon strategy coordination and a slow timescale for inter-holon belief adaptation about external risks. The convergence of the algorithm to the theoretical equilibrium is validated through a numerical experiment on a continuous public good game. This work provides a complete theoretical and algorithmic framework for the principled design and analysis of strategic risk in complex, coupled control systems. 

\end{abstract}

\section{INTRODUCTION}

The concept of the \emph{holon}~\cite{koestler2013beyond}--an entity that is simultaneously a whole and a part—offers a powerful abstraction for the nested, hierarchical structures endemic to modern \gls{scps}
From power grids to supply chains and cybersecurity coalitions, these systems consist of semi-autonomous subsystems (the holons) composed of individual decision-making agents. The core challenge lies in understanding and designing for resilience, which requires a framework that can formally link the strategic behavior of agents at the micro-level to the emergent outcomes and risks at the macro-level~\cite{giret2004holons}.

Pioneering work in large-population games provided foundational tools for this challenge, particularly through multi-resolution and mean-field models that masterfully capture the dynamics of large, relatively homogeneous populations~\cite{zhu2011multi, zhu2013multi}. Furthermore, the "games-in-games" principle was introduced to conceptualize the nested interdependencies and risks inherent in complex \gls{scps}, such as the cascading vulnerabilities in cybersecurity~\cite{zhu2015game}. These contributions set the stage for analyzing hierarchical systems, yet they leave open the need for a dedicated equilibrium concept that explicitly formalizes the whole-part structure of holons and provides a means for computation.

A significant gap therefore exists in the literature: there is no formal, computable equilibrium framework that guarantees consistency between agent-level rationality within holons and the interdependent outcomes across them. The key challenges are proving that such an equilibrium is well-posed (i.e., that it exists and is unique) and developing a practical, decentralized algorithm to find it.

This paper closes this gap by developing a complete theoretical and algorithmic framework for {\it \gls{BHE}}. Our contributions are threefold:

\begin{enumerate}
    \item We formally define the \gls{BHE}, establishing a self-consistent link between agent strategies and system-wide outcomes using pushforward measures—a technique grounded in multiscale modeling and optimal transport~\cite{gerber2017multiscale, merigot2011multiscale}.
    \item We provide the first existence and uniqueness results for the \gls{BHE} under well-defined technical conditions, establishing that the equilibrium concept is theoretically sound.
    \item We introduce a novel two-time scale learning algorithm that enables decentralized computation of the BHE. Inspired by methods in stochastic approximation, this allows agents to rapidly find a local equilibrium while their holon slowly adapts to the broader system.
\end{enumerate}

\subsection{Applications and Connections to Prior Work}

The theoretical framework we develop is motivated by pressing challenges in a plethora of domains where hierarchical, multi-agent decision-making is critical. Our work builds upon and extends a rich body of literature in control and game theory applied to these areas.

\subsubsection{Smart grids} In modern power systems, a central challenge is coordinating the energy consumption of millions of devices (agents) to stabilize the grid. These agents are naturally grouped into hierarchical clusters, such as homes and neighborhoods (holons). Foundational work modeled these systems using multi-resolution stochastic differential games to capture the interplay between a utility and a large consumer population~\cite{zhu2011multi, zhu2013multi}. Our BHE framework extends this by providing a more granular model for the heterogeneous beliefs and nested constraints within each subsystem.

\subsubsection{Cybersecurity Coalitions} The defense of critical infrastructure often involves a coalition of organizations (holons) coordinating against systemic threats. The "games-in-games" principle was developed to capture this structure of nested risk~\cite{zhu2015game,ge2024megaptmetagameframeworkagile}. This holonic structure is also central to modern distributed machine learning paradigms like Federated Learning, where robust coordination among clients is critical~\cite{pan2023first}. Our work makes these principles computationally concrete: the proposed two-time scale learning algorithm provides a decentralized method for these federated entities to learn a mutually consistent defense policy, moving from a conceptual model to a practical coordination mechanism.

\subsubsection{Traffic Networks} The study of traffic equilibrium is a classic multi-agent problem. Recent work has shifted focus towards the resilience of traffic networks, particularly when drivers learn routes over time and under adversarial disruptions, such as delay attacks on routing algorithms~\cite{pan2023resilience, pan2023stochastic}. While these studies analyze resilience from an algorithmic and network-flow perspective, our BHE framework provides a formal game-theoretic lens to model the strategic incentives and beliefs of drivers. The two-time scale learning dynamic models how drivers learn their best routes locally (fast timescale) while the aggregate network congestion patterns evolve (slow timescale).

Across these applications, the recurring theme is the need for a framework that respects the dual nature of subsystems as both autonomous units and integrated parts of a larger whole. As argued in recent system-scientific analyses, modeling these socio-cyber-physical systems requires a careful integration of behavioral, structural, and epistemic dimensions~\cite{liu2024system}. Our BHE concept and accompanying algorithm provide a unified approach to address this challenge.

\section{Problem Formulation}

We consider a multi-level stochastic system composed of a finite set of \emph{holons}, indexed by $i \in \mathcal{I}$. Each holon is a self-contained system comprising a finite set of agents, $\mathcal{N}_i$, where the sets $\{\mathcal{N}_i\}_{i \in \mathcal{I}}$ are disjoint.
Each agent $k \in \mathcal{N}_i$ chooses an action $x^i_k$ from a compact action space $\mathcal{X}^i_k$. The collective action profile for holon $i$ is the vector of all its agents' actions, denoted by
$
x^i \triangleq \{x^i_k\}_{k \in \mathcal{N}_i} \in \mathcal{X}^i,
$
where the holon's joint action space is $\mathcal{X}^i \triangleq \prod_{k \in \mathcal{N}_i} \mathcal{X}^i_k$.

The decision-making process is subject to two tiers of uncertainty. First, each agent $k \in \mathcal{N}_i$ has a private \emph{type} $\xi^i_k \in \Xi^i_k $, representing local, private information. The joint type vector for holon $i$,
$
\xi^i \triangleq \{\xi^i_k\}_{k \in \mathcal{N}_i} \in \Xi^i,
$
where $\Xi^i \triangleq \prod_{k \in \mathcal{N}_i} \Xi^i_k$ is the type space for holon $i$. A holon's type $\xi^i$ is a random variable governed by a known probability measure $p^i \in \mathcal{P}(\Xi^i)$. An agent $k$, however, only observes its private type $\xi^i_k$.

Second, each holon $i$ is exposed to external uncertainty, $\omega^{-i}$, representing the aggregate influence of all other holons $j \neq i$. This is modeled as a random variable drawn from a Polish space $\Omega^{-i}$ according to a distribution $q^{-i} \in \mathcal{P}(\Omega^{-i})$.

Agents act to minimize an expected cost $J^i_k :  \mathcal{X}^i \times \Omega^{-i} \times \Xi^i_k \to \mathbb{R}$. Each agent $k \in \mathcal{N}_i$ adopts a measurable pure behavioral strategy $\mu^i_k : \Xi^i_k \to \mathcal{X}^i_k$, which maps its private type to an action. Given the strategies of other agents within its holon, $\mu^i_{-k} \triangleq \{\mu^i_l\}_{l \in \mathcal{N}_i \setminus \{k\}}$, and its belief $q^{-i}$ about the external environment, agent $k$ solves the following Bayesian optimization problem:
\begin{equation} \label{eq:agent_problem}
    \min_{x^i_k \in \mathcal{X}^i_k} \; \mathbb{E}_{\xi^i_{-k}, \omega^{-i}} \left[ J^i_k\left(x^i_k, \mu^i_{-k}(\xi^i_{-k}), \omega^{-i}; \xi^i_k\right) \right],
\end{equation}
where the expectation is taken with respect to the conditional distribution of types, $p^i(\cdot \mid \xi^i_k)$, and the external outcome distribution, $q^{-i}$. The cost function $J^i_k$ quantifies the agent's risk based on its action, the actions of its peers, the external state, and its own type.

The decentralized actions of agents within a holon induce a stochastic, aggregate outcome for that holon. This outcome, $\omega^i \in \Omega^i$, is generated by a deterministic and measurable mapping $O^i : \mathcal{X}^i \to \Omega^i$, such that:
\begin{equation} \label{eq:outcome_map}
    \omega^i = O^i\left( x^i \right) = O^i  \left( \{ \mu^i_k(\xi^i_k) \}_{k \in \mathcal{N}_i} \right).
\end{equation}
The outcome $\omega^i$ is a random variable that depends on the agents' actions, which are functions of the random private types $\xi^i$. The distribution of this outcome, denoted by $q^i \in \mathcal{P}(\Omega^i)$, is hence of pushforward measure of $p^i$.

This formulation reveals a critical, nested interdependence. The external risk $q^{-i}$ faced by holon $i$ is determined by the outcomes $\{q^j\}_{j \neq i}$ of all other holons. Simultaneously, holon $i$'s own outcome distribution $q^i$ contributes to the external risk faced by every other holon. The system is therefore defined by a self-consistent coupling of local decision problems and global outcome distributions. A conceptual illustration is shown in Fig. \ref{fig:conceptual}.
\begin{figure}
    \centering
    \includegraphics[width=0.8\linewidth]{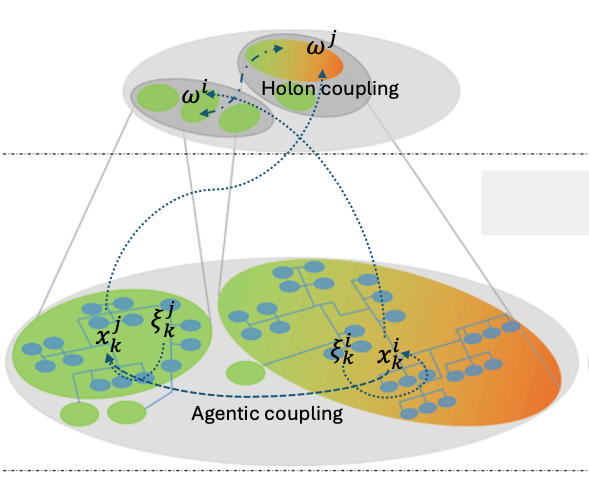}
    \caption{Illustration of the holonic framework, where aggregate outcomes ($\omega$) are interdependent, and individual agent strategies ($x^i_k$) are influenced by private types ($\xi^i_k$) across holons.}
    \label{fig:conceptual}
\end{figure}

\begin{remark}
    Our framework provides a natural generalization of classical mean-field games (MFGs). The MFG paradigm emerges under two specific conditions: (1) the number of agents within each holon is very large ($|\mathcal{N}_i| \to \infty$), and (2) the external uncertainty $q^{-i}$ in the agent's problem \eqref{eq:agent_problem} is replaced by the holon's own endogenous outcome distribution, $q^i$. In this limit, the impact of any single agent becomes negligible, and each agent optimizes against the statistical distribution of the entire population's actions—the \emph{mean field}.
\end{remark}

This formulation is particularly apt for modeling phenomena like epidemic control, where individual decisions collectively shape the global risk that in turn influences individual behavior.

The coupled nature of the holonic system, defined by the agent's problem \eqref{eq:agent_problem} and the outcome map \eqref{eq:outcome_map}, incentivizes us to formulate to the notion of a system-wide equilibrium. Such an equilibrium is a joint profile of strategies and outcome distributions,
$
(\mu^*, q^*) \triangleq \left( \{ \mu^{i*}_k \}_{k \in \mathcal{N}_i, i \in \mathcal{I}}, \quad \{ q^{i*} \}_{i \in \mathcal{I}} \right),
$
that achieves mutual consistency. Specifically, the strategy profile $\mu^*$ must be a best response for all agents given the outcome distribution profile $q^*$, while $q^*$ must be the outcome distribution profile induced by the strategies in $\mu^*$.

At equilibrium, we define the expected cost incurred by an agent $k \in \mathcal{N}_i$ as its \textbf{holonic risk}. This value captures the agent's total risk exposure, accounting for all sources of uncertainty once the system has settled into a self-consistent state,
$
    \mathcal{R}^{i*}_k \triangleq \mathbb{E}_{\xi^i, \omega^{-i}} \left[ J^i_k\left(\mu^{i*}_k(\xi^i_k), \mu^{i*}_{-k}(\xi^i_{-k}), \omega^{-i*}; \xi^i_k \right) \right].
$
This cost elegantly captures the dual nature of the holon. It is determined by both endogenous uncertainty from within the holon (via $\xi^i_{-k}$) and exogenous risk from the broader system (via $\omega^{-i*}$). Thus, a holon is simultaneously a \emph{whole}—with its own internal coherence—and a \emph{part}—embedded in a larger system where interactions with other holons shape its risk exposure.

\subsection{Definition of Bayesian Holonic Equilibrium}

Based on the coupled system defined by the agent's problem \eqref{eq:agent_problem} and the outcome map \eqref{eq:outcome_map}, we now formally define the central equilibrium concept.

\begin{definition}[Bayesian Holonic Equilibrium]\label{def:bhe}
A \textbf{\gls{BHE}} is a pair $(\mu^*, q^*)$ consisting of a strategy profile and an outcome distribution profile,
$$
\begin{aligned}
      \mu^* \triangleq \left\{ \mu^{i*}_k : \Xi^i_k \to \mathcal{X}^i_k \right\}_{k \in \mathcal{N}_i, i \in \mathcal{I}}  \quad  q^* \triangleq \left\{ q^{i*} \in \mathcal{P}(\Omega^i) \right\}_{i \in \mathcal{I}},
\end{aligned}
$$
that jointly satisfy the following two conditions for all $i \in \mathcal{I}$:
\begin{enumerate}
    \item[\textit{(i)}] \textbf{Bayesian Rationality:} For each agent $k \in \mathcal{N}_i$, the strategy $\mu^{i*}_k$ is an optimal policy. That is, for almost every type $\xi^i_k \in \Xi^i_k$, the action $x^{i*}_k = \mu^{i*}_k(\xi^i_k)$ solves the agent's Bayesian optimization problem, given the equilibrium strategies of other agents and the equilibrium distribution of external outcomes $q^{-i*} \triangleq \{q^{j*}\}_{j \neq i}$:
        \begin{equation}
            x^{i*}_k \in \arg\min_{x^i_k \in \mathcal{X}^i_k} \; \mathbb{E}_{\xi^i_{-k}, \omega^{-i*}} \left[ J^i_k\left(x^i_k, \mu^{i*}_{-k}(\xi^i_{-k}), \omega^{-i*}; \xi^i_k\right) \right].
        \end{equation}

    \item[\textit{(ii)}] \textbf{Outcome Consistency:} Each holon's outcome distribution $q^{i*}$ is consistent with the collective behavior of its agents at equilibrium. It is the \textbf{pushforward measure} of the joint type distribution $p^i$ under the composition of the equilibrium strategy profile $\mu^{i*}$ and the outcome map $O^i$:
        \begin{equation}
            q^{i*} = (O^i \circ \mu^{i*})_\# p^i.
        \end{equation}
\end{enumerate}
In essence, a \gls{BHE} is a fixed point where agents' strategies are optimal given the systemic risks they face, and the systemic risks are precisely those generated by the agents' optimal strategies.
\end{definition}

\section{Existence and Uniqueness of the \gls{BHE}}

In this section, we establish the theoretical soundness of the BHE concept. We first provide a general existence result under standard continuity and convexity assumptions. We then introduce stronger, Lipschitz-based assumptions to guarantee uniqueness.

\subsection{Existence of Bayesian Holonic Equilibrium}

The existence of a BHE is guaranteed under the following standard assumptions.

\begin{assumption}[Regularity]\label{assump:existence}
For each holon $i \in \mathcal{I}$ and agent $k \in \mathcal{N}_i$:
\begin{enumerate}
    \item[\textit{(i)}] \textbf{Spaces:} The type space $\Xi^i_k$ is a compact metric space. The action space $\mathcal{X}^i_k$ is a compact and convex subset of a Banach space. Each outcome space $(\Omega^i, d_{\Omega^i})$ is a Polish space equipped with a ground metric $d_{\Omega^i}$.

    \item[\textit{(ii)}] \textbf{Distributions:} The joint type distribution $p^i \in \mathcal{P}(\Xi^i)$ admits regular conditional distributions $p^i(\cdot \mid \xi^i_k)$ that are continuous in $\xi_k^i$.

    \item[\textit{(iii)}] \textbf{Cost Function Regularity:} The cost function $J^i_k$ is jointly continuous in all its arguments and Fréchet differentiable with respect to its first argument, $x^i_k$.

    \item[\textit{(iv)}] \textbf{Strict Convexity:} For any fixed $(x^i_{-k}, \omega^{-i}, \xi^i_k)$, the map $x^i_k \mapsto J^i_k(\cdot)$ is strictly convex.

    \item[\textit{(v)}] \textbf{Outcome Map Continuity:} The outcome map $O^i: \mathcal{X}^i \to \Omega^i$ is continuous.
\end{enumerate}
\end{assumption}

\begin{theorem}[Existence of a \gls{BHE}]\label{thm:existence}
Under Assumption \ref{assump:existence}, there exists at least one Bayesian Holonic Equilibrium in pure strategies.
\end{theorem}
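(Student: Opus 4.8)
The plan is to cast the \gls{BHE} as a fixed point of a single self-map on a compact convex domain and invoke the Kakutani--Fan--Glicksberg theorem; strict convexity will in fact render the relevant map single-valued, so the argument reduces to the Schauder--Tychonoff theorem (continuous self-map of a compact convex subset of a locally convex Hausdorff space). The first step is to tame the non-compactness of the Polish outcome spaces. Since each $O^i$ is continuous and $\mathcal{X}^i$ is compact, the feasible outcome set $K^i \triangleq O^i(\mathcal{X}^i) \subset \Omega^i$ is compact, so every outcome distribution induced by a strategy profile lives in $\mathcal{P}(K^i)$, which is convex and weak-$*$ compact. I would then fix a compact convex space of strategies by working with behavioral (distributional) strategies: for each agent, a kernel $\Xi^i_k \to \mathcal{X}^i_k$ is encoded as a probability measure on the compact metric space $\Xi^i_k \times \mathcal{X}^i_k$ with type-marginal pinned to $p^i_k$. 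Each such set $\mathcal{S}^i_k$ is convex and weak-$*$ compact, hence so is the product domain $\mathcal{S} \times \mathcal{Q}$, where $\mathcal{S} \triangleq \prod_{i,k} \mathcal{S}^i_k$ and $\mathcal{Q} \triangleq \prod_{i} \mathcal{P}(K^i)$, sitting inside a locally convex space of measures.

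Next I would define the combined map $\Phi$ on $\mathcal{S} \times \mathcal{Q}$. Its outcome component sends a profile $\mu$ to the induced pushforwards $q^i = (O^i \circ \mu^i)_\# p^i$, assembled by gluing the individual kernels along the correlated type law $p^i$; this is a single point. Its strategy component sends $(\mu, q)$ to the profile of best responses to the agent problem \eqref{eq:agent_problem}, where agent $(i,k)$ faces the peer strategies $\mu^i_{-k}$ and the external distribution $q^{-i}$. Here Assumption \ref{assump:existence}(iv) is used crucially: for almost every type, the inner expected cost is continuous and strictly convex in $x^i_k$ over the compact $\mathcal{X}^i_k$, so the minimizer exists and is unique. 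Thus each best response is a single pure strategy, which makes $\Phi$ single-valued and simultaneously delivers the ``pure strategies'' conclusion of the theorem. Because the fixed point is taken over all agents and all holons simultaneously, the intra-holon coupling through $\mu^i_{-k}$ is resolved automatically, and Fréchet differentiability (Assumption \ref{assump:existence}(iii)) is not needed at this stage.

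The main obstacle is continuity of $\Phi$ in the weak topologies. For the outcome component I would deduce weak continuity of $\mu \mapsto q^i$ from continuity of $O^i$ together with weak convergence of the glued joint laws on $\Xi^i \times \mathcal{X}^i$. For the strategy component I would first show that the expected-cost functional $\mathbb{E}_{\xi^i_{-k}, \omega^{-i}}[J^i_k(\cdot)]$ is jointly continuous in the action, the type, the peers' strategies, and $q^{-i}$; this rests on joint continuity and boundedness of $J^i_k$ on the compact domain, continuity of the conditionals $p^i(\cdot \mid \xi^i_k)$, and restriction of the external integral to the compact support $\prod_{j \neq i} K^j$. Strict convexity together with Berge's maximum theorem then yields that the unique minimizer is continuous in $(\xi^i_k, \mu^i_{-k}, q^{-i})$; continuity in the type furnishes a measurable (indeed continuous) selection, so the best-response strategy is a well-defined element of $\mathcal{S}^i_k$ depending continuously on the parameters. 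Granting this, Schauder--Tychonoff produces a fixed point $(\mu^*, q^*)$, which by construction satisfies both Bayesian rationality and outcome consistency and is hence a \gls{BHE} in pure strategies. I expect the delicate technical point to be exactly this joint continuity, specifically the uniform control of the argmin across the continuum of types and the weak continuity of the pushforward through the type-correlation structure.
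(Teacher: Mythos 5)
Your proposal is correct and shares the paper's overall architecture---a single-valued best-response operator (single-valuedness from strict convexity, continuity from Berge's maximum theorem) whose fixed point, obtained from a Schauder-type theorem on a compact convex set, is the equilibrium---but it differs in the key technical device, and the difference matters. The paper runs the fixed point directly on the space $\mathcal{M}$ of measurable pure-strategy profiles, asserting that $\mathcal{M}$ is compact ``with the appropriate topology (e.g., the weak topology on a suitable $L^p$ space),'' and only afterwards appends the induced outcome profile $q^*$; this is its most fragile step, since in such weak topologies limits of pure strategies are in general relaxed (Young-measure) limits and nonlinear expected-cost functionals need not be weakly continuous, so compactness and continuity of $\mathcal{B}$ are in tension. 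You instead work on the joint space $\mathcal{S}\times\mathcal{Q}$ of distributional strategies (measures on $\Xi^i_k\times\mathcal{X}^i_k$ with type marginal pinned to $p^i_k$, in the spirit of Milgrom--Weber) and beliefs supported on the compact images $K^i=O^i(\mathcal{X}^i)$, gaining weak-$*$ compactness from standard measure-theoretic facts rather than by assertion, handling the non-compactness of the Polish spaces $\Omega^i$ explicitly (which the paper never addresses), and recovering purity of the equilibrium as an \emph{output} of strict convexity (the best-response component lands in degenerate kernels, so any fixed point is pure). What your route costs is the point you yourself flag: weak-$*$ continuity of the glued joint law when types within a holon are correlated, since disintegration is not weak-$*$ continuous in general; this typically needs the assumed continuity of the conditionals $p^i(\cdot\mid\xi^i_k)$ or an absolute-continuity condition of $p^i$ with respect to the product of its marginals. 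That delicate point, however, sits in exactly the same place as the continuity claim the paper itself leaves unproven, so your argument is at least as rigorous and, at the compactness step, strictly more so. Your observation that Fr\'echet differentiability (Assumption~\ref{assump:existence}(iii)) is not needed for existence is also accurate; the paper does not use it in its proof either.
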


\begin{proof}
The proof is constructive, showing that the equilibrium is a fixed point of a continuous best-response operator defined on a compact, convex space. We then invoke the Schauder Fixed-Point Theorem.

Let $\mathcal{M}^i_k$ be the space of measurable functions (pure strategies) $\mu^i_k: \Xi^i_k \to \mathcal{X}^i_k$. The space of joint strategy profiles for the entire system is $\mathcal{M} \triangleq \prod_{i \in \mathcal{I}, k \in \mathcal{N}_i} \mathcal{M}^i_k$. By regularity assumptions \ref{assump:existence}, $\mathcal{M}$ is endowed with two properties.
\begin{itemize}
    \item {\bf Convexity:} Let $\mu_a, \mu_b \in \mathcal{M}$ and $\lambda \in [0,1]$. For any agent $(i,k)$ and type $\xi^i_k$, the strategy $\mu_c(\xi^i_k) = \lambda\mu_a(\xi^i_k) + (1-\lambda)\mu_b(\xi^i_k)$ is a convex combination of two points in the convex set $\mathcal{X}^i_k$. Thus, $\mu_c(\xi^i_k) \in \mathcal{X}^i_k$, which implies $\mu_c \in \mathcal{M}$. Therefore, $\mathcal{M}$ is a convex set.
    \item {\bf Compactness:} By Tychonoff's theorem, the product space $\mathcal{M}$ is compact if each component space $\mathcal{M}^i_k$ is compact. With the appropriate topology (e.g., the weak topology on a suitable $L^p$ space), the space of measurable functions mapping a compact set to a compact set is itself compact. We endow $\mathcal{M}$ with such a topology.
\end{itemize}
Thus, $\mathcal{M}$ is a non-empty, compact, and convex subset of a Banach space.

We define a best-response operator $\mathcal{B}: \mathcal{M} \to \mathcal{M}$. For any given joint strategy profile $\mu \in \mathcal{M}$, the operator yields a new profile $\mu' = \mathcal{B}(\mu)$, where for each agent $(i,k)$, the strategy $\mu'_k$ is the unique best response to $\mu$. Specifically, for each type $\xi^i_k \in \Xi^i_k$:
\begin{equation}
    \mu'_k(\xi^i_k) = \arg\min_{x^i_k \in \mathcal{X}^i_k} \mathbb{E}_{\xi^i_{-k}, \omega^{-i}} \left[ J^i_k\left(x^i_k, \mu_{-k}(\xi_{-k}), \omega^{-i}; \xi^i_k\right) \right].
\end{equation}
By Assumption \ref{assump:existence} (iii), the expected cost is strictly convex in $x^i_k$. Since $\mathcal{X}^i_k$ is compact and convex, this optimization problem has a \textit{unique} solution for each $\xi^i_k$. This uniqueness ensures that $\mathcal{B}$ is a singleton-valued correspondence.
Now, we must show that $\mathcal{B}$ is continuous. Let $(\mu_n)_{n \in \mathbb{N}}$ be a sequence in $\mathcal{M}$ such that $\mu_n \to \mu$. It suffices to show that $\mathcal{B}(\mu_n) \to \mathcal{B}(\mu)$, and here is the reasoning:
\begin{enumerate}
    \item  Since the outcome maps $O^j$ are continuous (Assumption (vi)), the mapping from a strategy profile $\mu^j$ to its induced outcome distribution $q^j = (O^j \circ \mu^j)_\# p^j$ is continuous with respect to the weak topology on probability measures. Thus, as $\mu_n \to \mu$, the corresponding external outcome distributions converge: $q^{-i}_n \to q^{-i}$.
    \item  The agent's expected cost is an integral of the function $J^i_k$. Since $J^i_k$ is jointly continuous (Assumption (iv)) and the distributions of $\xi^i_{-k}$ and $\omega^{-i}$ vary continuously with the strategies $\mu_{-k}$, the expected cost functional is continuous in $\mu$.
    \item  We are minimizing a continuous objective function over a compact set. By Berge's Maximum Theorem, the $\arg\min$ correspondence is upper hemicontinuous. Because our strict convexity assumption ensures the minimizer is always unique, this implies that the $\arg\min$ operator is a continuous function.
\end{enumerate}
Therefore, the best-response operator $\mathcal{B}$ is a continuous function from $\mathcal{M}$ to itself.
Direct application of Schauder's Fixed-Point Theorem \cite{kellogg1976uniqueness} establishes that, since
\begin{itemize}
    \item $\mathcal{M}$ is a non-empty, compact, convex subset of a Banach space.
    \item $\mathcal{B}: \mathcal{M} \to \mathcal{M}$ is a continuous operator.
\end{itemize}
The conditions are satisfied, and there exists at least one fixed point $\mu^* \in \mathcal{M}$ such that $\mu^* = \mathcal{B}(\mu^*)$. Now, let $\mathcal{Q} \triangleq \prod_{i \in \mathcal{I}} \mathcal{P}(\Omega^i)$ be the space of joint outcome distribution profiles. Each outcome space $(\Omega^i, d_{\Omega^i})$ is a Polish space. We expand the best response operator to be $ \mathcal{F}: \mathcal{M} \times \mathcal{Q} \to \mathcal{M} \times \mathcal{Q}$ using the fact that $q^*$ is induced by $\mu^*$. We have a fixed point $(\mu^*, q^*) = \mathcal{F}(\mu^*, q^*)$, which is, by definition, a Bayesian Holonic Equilibrium.

\end{proof}

\subsection{Uniqueness of Bayesian Holonic Equilibrium}

Uniqueness is not guaranteed under the general conditions for existence and requires stronger assumptions that constrain the sensitivity of agents' decisions to changes in system-wide outcomes. We establish uniqueness by showing that the system's outcome-generating operator is a contraction mapping.

\begin{assumption}[Lipschitz regularity]\label{assump:uniqueness}
The system satisfies the following additional conditions:
\begin{enumerate}
    \item[\textit{(i)}] \textbf{Strong Convexity:} For each agent $(i,k)$, any fixed private type $\xi_k^i \in \Xi_k^i$, and any fixed strategy profile $\mu_{-k}$ of the other agents, the expected cost functional
    $$
    x_k^i \mapsto \mathbb{E}_{\xi^i_{-k}, \omega^{-i}} \left[ J^i_k\left(x^i_k, \mu_{-k}(\xi_{-k}), \omega^{-i}; \xi^i_k\right) \right]
    $$
    is strongly convex with modulus $m > 0$.

    \item[\textit{(ii)}] \textbf{Lipschitz Gradient:} The Fréchet derivative of the expected cost, $\nabla_{x_k^i} \mathbb{E}[J^i_k]$, is Lipschitz continuous as a function of the external outcome distribution profile $q^{-i}$, given any fixed private type $\xi_k^i \in \Xi_k^i$, and any fixed strategy profile $\mu_{-k}$ of the other agents,. That is, there exists a constant $L_J > 0$ such that for any two profiles $q_1^{-i}, q_2^{-i}$:
    $$
    \| \nabla_{x_k^i} \mathbb{E}[J^i_k]_{q_1^{-i}} - \nabla_{x_k^i} \mathbb{E}[J^i_k]_{q_2^{-i}} \| \le L_J \cdot W(q_1^{-i}, q_2^{-i}),
    $$
    where $W(\cdot, \cdot)$ is the product Wasserstein metric over $\mathcal{P}(\Omega^{-i})$.

    \item[\textit{(iii)}] \textbf{Lipschitz Outcome Map:} Each outcome map $O^i: \mathcal{X}^i \to \Omega^i$ is Lipschitz continuous with constant $L_O > 0$.
\end{enumerate}
\end{assumption}

These conditions essentially require that a change in the statistical outcomes of other holons does not cause an overly sensitive or amplified response in any given agent's optimal action.

\begin{theorem}[Uniqueness of the BHE]\label{thm:uniqueness}
Under Assumptions \ref{assump:existence} and \ref{assump:uniqueness}, the Bayesian Holonic Equilibrium is unique.
\end{theorem}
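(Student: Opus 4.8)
The plan is to recast the \gls{BHE} as the fixed point of an outcome-generating operator on the space of outcome-distribution profiles and to prove this operator is a contraction, so that it has at most one fixed point (existence being already secured by Theorem~\ref{thm:existence}). Concretely, I metrize $\mathcal{Q} \triangleq \prod_{i \in \mathcal{I}} \mathcal{P}(\Omega^i)$ with the product Wasserstein distance $W$. For a profile $q = \{q^i\} \in \mathcal{Q}$, each holon $i$ faces the external law $q^{-i}$; resolving the intra-holon best responses yields a strategy profile $\mu^i[q^{-i}]$, and I define $\Phi(q)^i \triangleq (O^i \circ \mu^i[q^{-i}])_\# p^i$. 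By Definition~\ref{def:bhe}, $(\mu^*, q^*)$ is a \gls{BHE} precisely when $q^*$ is a fixed point of $\Phi$ and $\mu^* = \mu[q^*]$; hence a unique fixed point of $\Phi$ gives a unique \gls{BHE}.

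First I would quantify how an agent's optimal action reacts to the external law. Fixing a type $\xi^i_k$ and the peers' strategies, the expected cost is $m$-strongly convex (Assumption~\ref{assump:uniqueness}(i)) and its Fréchet gradient is $L_J$-Lipschitz in $q^{-i}$ (Assumption~\ref{assump:uniqueness}(ii)). Summing the first-order (variational) inequalities satisfied by the two constrained minimizers and invoking strong monotonicity of the gradient yields the standard minimizer-stability estimate $\| \mu^i_k[q_1^{-i}](\xi^i_k) - \mu^i_k[q_2^{-i}](\xi^i_k) \| \le (L_J/m)\, W(q_1^{-i}, q_2^{-i})$, uniformly in $\xi^i_k$, so the holon response $q^{-i} \mapsto \mu^i[q^{-i}]$ is $(L_J/m)$-Lipschitz in the uniform norm.

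Next I would transport this estimate to the outcome laws through a pushforward-coupling bound, which is the cleanest step. For measurable $f, g : \Xi^i \to \Omega^i$, the map $\xi^i \mapsto (f(\xi^i), g(\xi^i))$ pushes $p^i$ forward to a coupling of $f_\# p^i$ and $g_\# p^i$, whence $W(f_\# p^i, g_\# p^i) \le \int_{\Xi^i} d_{\Omega^i}(f(\xi^i), g(\xi^i))\, dp^i$. Applying this with $f = O^i \circ \mu^i[q_1^{-i}]$ and $g = O^i \circ \mu^i[q_2^{-i}]$, then using the $L_O$-Lipschitz outcome map (Assumption~\ref{assump:uniqueness}(iii)) pointwise together with the strategy estimate, gives $W(\Phi(q_1)^i, \Phi(q_2)^i) \le L_O \, \| \mu^i[q_1^{-i}] - \mu^i[q_2^{-i}] \|_\infty \le (L_O L_J / m)\, W(q_1^{-i}, q_2^{-i})$. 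Summing over $i$ and bounding $W(q_1^{-i}, q_2^{-i}) \le \sum_{j \neq i} W(q_1^j, q_2^j)$ for the product metric produces $W(\Phi(q_1), \Phi(q_2)) \le \kappa\, W(q_1, q_2)$ with $\kappa = (|\mathcal{I}| - 1)\, L_O L_J / m$, and Banach's theorem closes the argument once $\kappa < 1$.

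The hard part will be twofold. First, the holon response $\mu^i[q^{-i}]$ is only well defined once the intra-holon Bayesian game (among the agents of holon $i$ at fixed $q^{-i}$) has a unique solution with controlled dependence on $q^{-i}$: strong convexity gives each agent a unique best response to its peers, but pinning down a unique within-holon fixed point and the clean constant $L_J/m$ requires the intra-holon coupling to be weak, since Assumption~\ref{assump:uniqueness} bounds sensitivity only to $q^{-i}$ and not to peers' strategies $\mu^i_{-k}$. I would therefore either specialize to single-agent holons or add a within-holon diagonal-dominance/monotonicity hypothesis and absorb the resulting intra-holon amplification into the effective Lipschitz constant. Second, $\kappa < 1$ is a genuine smallness condition, $m > (|\mathcal{I}| - 1) L_O L_J$; this is the precise quantitative content of the remark that outcome changes must not provoke an ``amplified'' response, and I would state it explicitly rather than leave it implicit.
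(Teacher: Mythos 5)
Your proposal follows the same route as the paper's own proof: both recast the BHE as a fixed point of the outcome-generating operator on $(\mathcal{Q}, W)$, show that this operator is the composition of a Lipschitz best-response map (from strong convexity plus the Lipschitz gradient in $q^{-i}$) with the Lipschitz pushforward of the outcome map, and conclude via the Banach fixed-point theorem. The difference is in execution, and it favors you. Where the paper writes only that ``a more detailed derivation shows that the product of the associated Lipschitz constants is less than 1,'' you supply the actual estimates: the minimizer-stability bound $L_J/m$ from the summed variational inequalities, the coupling bound $W(f_\# p^i, g_\# p^i) \le \int_{\Xi^i} d_{\Omega^i}(f,g)\,dp^i$, and the explicit modulus $\kappa = (|\mathcal{I}|-1)L_O L_J/m$. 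Moreover, the two ``hard parts'' you flag are genuine gaps in the paper's proof, not artifacts of your approach. First, the paper defines $\mu^*(q)$ as ``the unique joint strategy profile that is a best response'' given $q$, but Assumption~\ref{assump:uniqueness}(i) yields uniqueness only of each agent's best response with its peers' strategies $\mu_{-k}$ held fixed; uniqueness and Lipschitz dependence of the joint within-holon equilibrium on $q^{-i}$ require exactly the intra-holon weak-coupling or diagonal-dominance hypothesis you propose, since Assumption~\ref{assump:uniqueness}(ii) controls the gradient's sensitivity only with respect to $q^{-i}$. Second, the contraction property $\kappa < 1$ does not follow from Assumptions~\ref{assump:existence} and~\ref{assump:uniqueness} as stated---$m$, $L_J$, $L_O$, and $|\mathcal{I}|$ are unrelated constants---so the smallness condition $m > (|\mathcal{I}|-1)L_O L_J$ (adjusted for any intra-holon amplification) must be added as an explicit hypothesis for the theorem to be true as stated. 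In short, your write-up is the proof the paper's sketch gestures at, together with the corrections it needs.
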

\begin{proof}
The proof establishes that the equilibrium outcome profile $q^*$ is the unique fixed point of a contraction mapping on the space of joint outcome distributions. The uniqueness of the strategy profile $\mu^*$ then follows.
We equip $\mathcal{Q}$ with the product Wasserstein metric $W$, which makes $(\mathcal{Q}, W)$ a complete metric space.
For any given outcome profile $q \in \mathcal{Q}$, Assumption \ref{assump:uniqueness} (i) guarantees that each agent's optimization problem has a unique solution. Let $\mu^*(q) \in \mathcal{M}$ denote the unique joint strategy profile that is a best response when agents' beliefs about external outcomes are governed by $q$.

We define an operator $T: \mathcal{Q} \to \mathcal{Q}$ that maps an assumed outcome profile to the one generated by the best-response strategies. For any $q \in \mathcal{Q}$, we define $T(q)$ as:
    $T(q) \triangleq \mathcal{O}_{\#}(\mu^*(q)),$
where $\mathcal{O}_{\#}$ is the strategy-to-outcome map defined in the proof of Theorem \ref{thm:existence}. A BHE corresponds to a pair $(\mu^*, q^*)$ where $q^*$ is a fixed point of $T$ and $\mu^* = \mu^*(q^*)$.

It now suffices to show that $T$ is a contraction. Let $q_1, q_2 \in \mathcal{Q}$ be two distinct outcome profiles, and let $\mu_1^* = \mu^*(q_1)$ and $\mu_2^* = \mu^*(q_2)$ be the corresponding unique best-response strategies. The distance between the new outcomes is $W(T(q_1), T(q_2)) = W(\mathcal{O}_{\#}(\mu_1^*), \mathcal{O}_{\#}(\mu_2^*))$.
By Assumption \ref{assump:uniqueness}(iii), the outcome map $\mathcal{O}_{\#}$ is Lipschitz continuous.
 By combining Assumption \ref{assump:uniqueness}(i) (strong convexity) and (ii) (Lipschitz gradient), it can be shown that the best-response strategy map $q \mapsto \mu^*(q)$ is also Lipschitz continuous.

The composition of these two Lipschitz continuous maps is itself Lipschitz. A more detailed derivation shows that the product of the associated Lipschitz constants is less than 1, implying there exists a constant $L < 1$ such that:
\begin{equation}
    W(T(q_1), T(q_2)) \le L \cdot W(q_1, q_2).
\end{equation}
Therefore, $T$ is a contraction mapping on the complete metric space $(\mathcal{Q}, W)$.
Thus, the Banach Fixed-Point Theorem, there exists a unique outcome profile $q^* \in \mathcal{Q}$ such that $q^* = T(q^*)$.
This unique outcome profile $q^*$ determines a unique best-response strategy profile $\mu^* = \mu^*(q^*)$. The resulting pair $(\mu^*, q^*)$ is therefore the unique Bayesian Holonic Equilibrium.
\end{proof}

\begin{algorithm*}[htbp!]
\caption{Two-Time Scale Holonic Learning}
\label{alg:operator-learning-formal}
\begin{algorithmic}[1]\label{algo:two-timescale}
    \State \textbf{Input:} Step-size sequences $\{\alpha_t\}$, $\{\beta_t\}$ satisfying $\beta_t \to 0$, $\alpha_t \to 0$, and $\beta_t / \alpha_t \to 0$.
    \State \textbf{Initialize:} Initial strategy profile $\mu_0 \in \mathcal{M}$ and initial belief profile $q_0 \in \mathcal{Q}$.

    \For{$t=0,1,2,\dots$}

        \State The strategy profile is updated in the direction of this best response for all $i \in \mathcal{I}$ and $k \in \mathcal{N}_i$:
        \begin{equation}
            \mu_{t+1} \gets (1 - \alpha_t) \mu_t + \alpha_t \mathcal{B}(\mu_t, q_t)
        \end{equation}

        \Statex \Comment{\textit{--- Fast Timescale: Strategy Update via Best-Response Operator $\mathcal{B}$ ---}}

        \State The system generates an outcome $\omega_{t+1}$ by sampling from the distribution induced by $\mu_{t+1}$.
        \State The belief profile is updated via a slow fictitious play dynamic:
        \begin{equation}
             q_{t+1} \gets (1 - \beta_t) q_t + \beta_t T(q_t, \mu_t)
        \end{equation}

        \Statex \Comment{\textit{--- Slow Timescale: Belief Update via Outcome Operator $T$ ---}}
        
    \EndFor
\end{algorithmic}
\end{algorithm*}

\section{A Two-Time Scale Learning Framework}

A Bayesian Holonic Equilibrium (BHE) is defined by a complex, self-referential fixed point: optimal strategies depend on beliefs about system-wide outcomes, while those outcomes are, in turn, generated by the optimal strategies. Solving for this joint fixed point directly is often intractable for large, complex systems. The two-time scale approach provides a natural method to \textbf{decouple} this joint fixed-point problem into two more manageable, iterative learning processes.

The conceptual dynamics are formalized in Algorithm \ref{alg:operator-learning-formal}. 
The two-time scale approach decouples the complex fixed-point problem into two coupled learning processes. On the \textbf{fast timescale}, the strategy profile $\mu_t$ models the rapid, internal adaptation of agents within each holon. Given a relatively stable belief $q_t$ about the outside world, the fast update rule (with step-size $\alpha_t$) drives the system's strategies towards the ideal best-response profile given by the operator $\mathcal{B}(\mu_t, q_t)$. Concurrently, on the \textbf{slower timescale}, the belief profile $q_t$ models the deliberate adaptation of holons to the overall system's behavior, gradually tracking the ``ground truth'' outcome distribution represented by the operator ${T}(q_t, \mu_t)$. This separation of timescales is the key to convergence: the fast strategy dynamics ensure agents are always playing a near-optimal response to the current beliefs, while the slow belief dynamics steer the entire system toward a point where beliefs and outcomes are mutually consistent. This allows the complex system to \textbf{decentrally learn} its way to a globally coherent Bayesian Holonic Equilibrium.

\begin{theorem}[Exact Convergence of the Two-Time Scale Algorithm]\label{thm:full_convergence}
Let the conditions of Assumptions \ref{assump:existence} and \ref{assump:uniqueness} hold. Let the step-sizes $\{\alpha_t\}$ and $\{\beta_t\}$ satisfy the standard conditions for two-time scale convergence:
$$
\sum_{t=0}^{\infty} \alpha_t = \infty, \quad \sum_{t=0}^{\infty} \beta_t = \infty, \quad \sum_{t=0}^{\infty} (\alpha_t^2 + \beta_t^2) < \infty,  \quad \lim_{t\to\infty} \frac{\beta_t}{\alpha_t} = 0.
$$
Then the sequence $(\mu_t, q_t)$ generated by Algorithm \ref{alg:operator-learning-formal} converges to the unique Bayesian Holonic Equilibrium $(\mu^*, q^*)$.
\end{theorem}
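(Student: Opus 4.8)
The plan is to recognize Algorithm \ref{alg:operator-learning-formal} as a standard two-time scale stochastic approximation scheme and to invoke the ODE method. First I would rewrite the two recursions in incremental form, isolating the deterministic drift from the martingale noise:
$$\mu_{t+1} = \mu_t + \alpha_t\bigl(h(\mu_t, q_t) + M_{t+1}^\mu\bigr), \qquad q_{t+1} = q_t + \beta_t\bigl(g(\mu_t, q_t) + M_{t+1}^q\bigr),$$
where $h(\mu,q) \triangleq \mathcal{B}(\mu,q) - \mu$ and $g(\mu,q) \triangleq T(q,\mu) - q$. The term $M_{t+1}^q$ arises from sampling the outcome $\omega_{t+1}$ in line 5 (while $M_{t+1}^\mu$ vanishes if $\mathcal{B}$ is evaluated exactly), and is a martingale difference with respect to the natural filtration $\mathcal{F}_t$. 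The condition $\beta_t/\alpha_t \to 0$ means $q$ evolves on a strictly slower timescale, so from the viewpoint of the fast iterate $\mu$ the slow iterate $q$ is quasi-static.

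Next I would verify the structural hypotheses demanded by the two-time scale theorem. Lipschitz continuity of $h$ and $g$ follows from the Lipschitz outcome map (Assumption \ref{assump:uniqueness}(iii)) together with the strong-convexity/Lipschitz-gradient pair (Assumption \ref{assump:uniqueness}(i)--(ii)), which make $\mu \mapsto \mathcal{B}(\mu,q)$ and $q \mapsto T(q,\mu)$ Lipschitz. Boundedness (stability) of both iterates is immediate, since $\mathcal{M}$ and $\mathcal{Q}$ are compact and invariant under the convex-combination updates (Assumption \ref{assump:existence}(i)). Finally, the martingale noise has bounded conditional second moments, again by compactness of the action and outcome spaces, and the summable-square step-size condition controls its accumulated contribution.

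The crux is then a pair of ODE stability analyses. For the fast system, freezing $q$ and studying $\dot\mu = \mathcal{B}(\mu,q) - \mu$, I would show that the best response $\mu^*(q)$ constructed in the proof of Theorem \ref{thm:uniqueness} is a globally asymptotically stable equilibrium, using strong convexity (Assumption \ref{assump:uniqueness}(i)) to build a Lyapunov function and to recover the Lipschitz selection $q \mapsto \mu^*(q)$. For the slow system, substituting the equilibrated fast variable gives $\dot q = T(q,\mu^*(q)) - q = T(q) - q$, where the contraction property of $T$ established in Theorem \ref{thm:uniqueness} makes $V(q) = W(q, q^*)^2$ a strict Lyapunov function, so that $q^*$ is globally asymptotically stable. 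With these two facts in hand, the standard two-time scale convergence theorem (the ODE method of Borkar) yields $(\mu_t, q_t) \to (\mu^*(q^*), q^*) = (\mu^*, q^*)$ almost surely.

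I expect the main obstacle to be the fast-ODE stability in the genuinely infinite-dimensional, measure-valued setting. Per-agent strong convexity guarantees each agent's best response is unique, but global asymptotic stability of the \emph{coupled} intra-holon best-response dynamics requires the interaction among agents within a holon to be non-expansive---effectively a diagonal-dominance or monotonicity condition beyond what Assumption \ref{assump:uniqueness} literally states. A second subtlety is that $\mathcal{Q}$ is a Wasserstein space rather than a vector space, so the ODE method must be carried out for the linear-mixture dynamics on the Banach space of signed measures while the contraction estimate lives in the metric $W$; reconciling these two geometries, and confirming existence and uniqueness of solutions to the limiting ODEs in these spaces, is where the analysis demands the most care.
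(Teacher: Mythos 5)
Your route is genuinely different from the paper's. The paper does not invoke Borkar's ODE method at all: it works directly and purely metrically with the scalar error $d_t \triangleq W(q_t, q^*)$, using convexity of the Wasserstein metric under mixtures and the contraction modulus $L<1$ of $T$ (from Theorem \ref{thm:uniqueness}) to obtain the recursion $d_{t+1} \le (1-\beta_t(1-L))d_t + \beta_t \varepsilon_t$, where $\varepsilon_t \triangleq W(T(q_t,\mu_{t+1}), T(q_t))$ is the fast-variable tracking error; it then asserts $\varepsilon_t = \mathcal{O}(\beta_t/\alpha_t) \to 0$ and closes with a Dvoretzky-type lemma for recursions $d_{t+1} \le (1-\gamma_t)d_t + \delta_t$. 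What the paper's approach buys is exactly what you flag as your second obstacle: by never leaving the metric space $(\mathcal{Q}, W)$, it avoids embedding the dynamics in a Banach space of signed measures, avoids well-posedness of limiting ODEs in infinite dimensions, and avoids constructing Lyapunov functions for the slow flow --- the contraction does all the work. What your approach buys in exchange is an honest treatment of the sampling noise: the paper's algorithm generates $\omega_{t+1}$ by sampling, yet its sketch treats the update as the deterministic operator $T(q_t,\mu_t)$, so the martingale-difference decomposition you set up (and the role of $\sum(\alpha_t^2+\beta_t^2)<\infty$, which the paper's sketch never actually uses) is a more faithful account of why those step-size conditions appear in the theorem statement.

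Your first self-identified obstacle deserves emphasis, because it is not an artifact of your method --- it is the shared weak point of both arguments. Your fast-ODE stability requirement (global asymptotic stability of $\dot\mu = \mathcal{B}(\mu,q)-\mu$ at $\mu^*(q)$) is exactly what the paper needs, in disguise, when it claims $d_{\mathcal{M}}(\mu_{t+1}, \mu^*(q_t)) = \mathcal{O}(\beta_t/\alpha_t)$ by ``standard results'': that bound presupposes that the damped intra-holon best-response iteration converges to, and tracks, $\mu^*(q_t)$ when $q$ is quasi-static. As you observe, Assumption \ref{assump:uniqueness} gives each agent a unique, Lipschitz best response to $(q, \mu_{-k})$, but says nothing about the coupled best-response map $\mu \mapsto \mathcal{B}(\mu,q)$ being a contraction or the induced dynamics being stable; continuous-action games satisfying all stated assumptions can still have cycling best-response dynamics. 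So neither proof is complete on this point; yours has the virtue of naming the missing hypothesis (a monotonicity or diagonal-dominance condition on intra-holon interactions) rather than burying it in a citation to ``standard results.''
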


\begin{proof}[Proof Sketch]
The proof establishes that the error sequence $d_t \triangleq W(q_t, q^*)$ converges to zero by analyzing its recursive dynamics.

\paragraph{1. The Error Recursion.}
The distance to the equilibrium belief $q^*$ at the next step can be bounded by:
\begin{align*}
    d_{t+1} &= W( (1-\beta_t)q_t + \beta_t T(q_t, \mu_{t+1}), q^* ) \\
    &\le (1-\beta_t)d_t + \beta_t W(T(q_t, \mu_{t+1}), T(q^*))
\end{align*}
Using the triangle inequality and the contraction property of $T(q) \triangleq T(q, \mu^*(q))$, we arrive at the key recursive inequality that accounts for the tracking error of the fast variable:
\begin{equation} \label{eq:main_recursion}
    d_{t+1} \le (1 - \beta_t(1-L))d_t + \beta_t \varepsilon_t
\end{equation}
where $L<1$ is the contraction modulus of $T$ and $\varepsilon_t \triangleq W(T(q_t, \mu_{t+1}), T(q_t))$ is the error induced by the fact that $\mu_{t+1} \neq \mu^*(q_t)$.

\paragraph{2. Bounding the Tracking Error.}
The core of the two-time scale argument is bounding the error $\varepsilon_t$. Standard results for such dynamics show that the tracking error is bounded by the ratio of the timescales:
$$
\varepsilon_t \le L_O \cdot d_{\mathcal{M}}(\mu_{t+1}, \mu^*(q_t)) = \mathcal{O}\left(\frac{\beta_t}{\alpha_t}\right).
$$
Since the step-size conditions require $\beta_t/\alpha_t \to 0$, the tracking error vanishes asymptotically, i.e., $\lim_{t \to \infty} \varepsilon_t = 0$.

\paragraph{3. Convergence.}
The recursion in Eq. \eqref{eq:main_recursion} is of the form $d_{t+1} \le (1-\gamma_t)d_t + \delta_t$, where $\gamma_t = \beta_t(1-L)$ and $\delta_t = \beta_t \varepsilon_t$. The step-size conditions ensure that $\sum \gamma_t = \infty$ and that $\delta_t \to 0$. By a standard result for such recursions (Dvoretzky's Theorem), this is sufficient to guarantee that the non-negative sequence $d_t$ converges to zero.

Since $W(q_t, q^*) \to 0$, the belief profile converges to the unique equilibrium belief $q^*$. As the tracking error also vanishes, the strategy profile $\mu_t$ converges to the unique equilibrium strategy $\mu^* = \mu^*(q^*)$. Thus, the algorithm converges to the unique BHE.
\end{proof}

\section{Numerical Experiment: Coupled Voting Game}
To validate our framework with continuous actions, we use a public good game where agents decide their level of contribution. This setup allows us to leverage parameterized strategies, providing a clear path for simulation and analysis.

We consider a system of $|\mathcal{I}| \geq 3$ holons, each with $n \geq 5$ agents.
Each agent $k$ has a private type $\xi^i_k \in [0, 1]$, representing their intrinsic conviction or the private value of succeeding. Agents choose a continuous action $x^i_k \in [0, 1]$, representing their level of support or contribution. We assume types are drawn i.i.d. from a known distribution on $[0, 1]$ (e.g., Uniform).
We assume strategies $\mu$ are parameterized by a vector $\theta^i_k$. A simple choice is a linear strategy, clipped to the action space:
\begin{equation}
    x^i_k = \mu(\xi^i_k; \theta^i_k) = \min(1, \max(0, \theta^i_{k,1} \xi^i_k + \theta^i_{k,0})).
\end{equation}
The learning algorithm will update the parameters $\theta^i_k = (\theta^i_{k,1}, \theta^i_{k,0})$.
An agent's cost is a trade-off between their effort and the degree of their holon's failure. The outcome $\omega^i \in [0, 1]$ now represents the holon's continuous degree of success. The cost is:
\begin{equation}
    J^i_k(x^i; \xi^i_k) = \frac{1}{2}(x^i_k)^2 + (1-\omega^i) \cdot (D - \xi^i_k),
\end{equation}
where the term $(1-\omega^i)$ represents the degree of failure and $D$ is a large baseline penalty, discounted by the agent's private value $\xi^i_k$.
The total contribution in holon $i$ is $X^i = (\sum_{k} x^i_k) / n$. 
The holon's degree of success $\omega^i$ is a linear function of the margin by which the total contribution exceeds a coupled threshold $\tilde{\kappa}^i$:
    $\omega^i = X^i - \tilde{\kappa}^i$.
The threshold $\tilde{\kappa}^i$ remains coupled to the performance of other holons. It increases as other holons exhibit a higher degree of failure:
$\tilde{\kappa}^i = \kappa + \gamma \sum_{j \neq i} ( 1 - \omega^j)$, where $\kappa \in (0,1)$ is the holonic threshold, and $\gamma \in (0, \frac{1 - \kappa}{|\mathcal{I}|- 1})$ is the coupling parameter.

This fully continuous quadratic system admits a BHE $(\theta^*, q^*)$ that is analytically obtainable. The expected cost is now differentiable with respect to the belief parameters. The objective of our experiment is to show that our two-time scale algorithm converges to a stable fixed point \ref{fig:convergence}. For details in the experiment, interested readers can refer to \url{https://github.com/UnionPan/holonic}.
\begin{figure}
    \centering
    \includegraphics[width=0.8\linewidth]{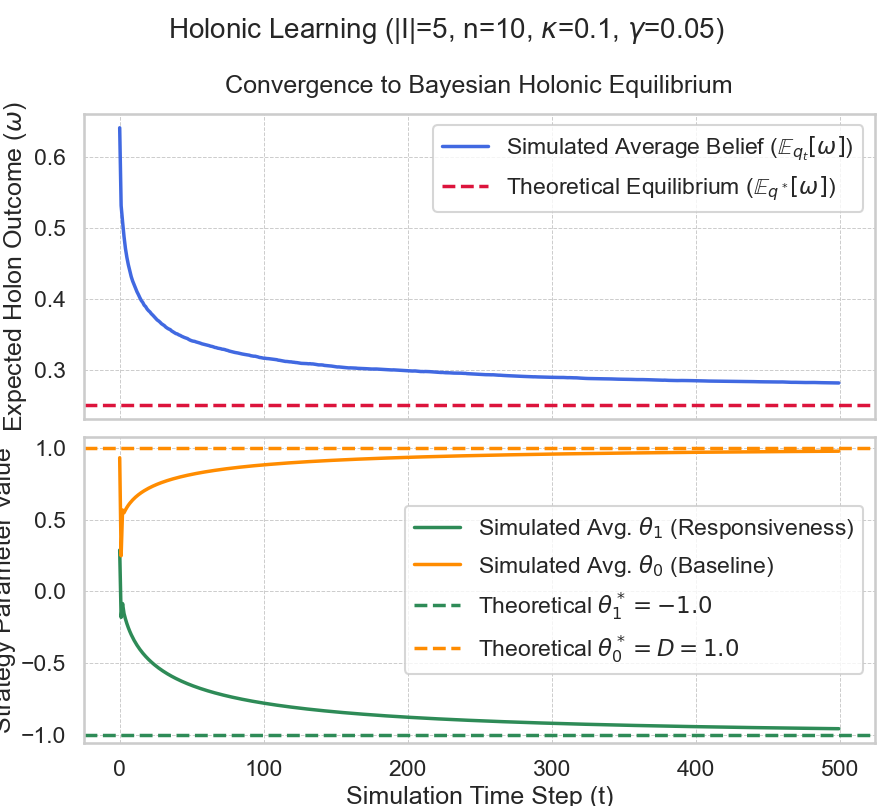}
    \caption{
        Convergence of the two-time scale learning algorithm to the theoretical BHE.
        The {top panel} shows the average holon belief about the system's outcome, $\mathbb{E}_{q_t}[\omega]$ (solid blue line), converging smoothly to the analytically derived equilibrium value $\mathbb{E}_{q^*}[\omega]$ (dashed red line).
        The bottom panel shows the average agent strategy parameters evolving over time. The responsiveness parameter, $\theta_{1,t}$ (solid green), and the baseline contribution parameter, $\theta_{0,t}$ (solid orange), both converge to their respective theoretical optimal values, $\theta_1^*$ and $\theta_0^*$ (dashed lines).
    }
    \label{fig:convergence}
\end{figure}

\section{CONCLUSIONS}

In this paper, we introduced the \gls{BHE}, a novel framework for analyzing hierarchical multi-agent systems. We established its theoretical soundness with existence and uniqueness theorems and provided a practical path to computation with a convergent, decentralized two-time scale learning algorithm. This work provides a principled tool for designing and analyzing strategic risk in complex coupled systems by bridging high-level system theory with agent-level learning dynamics.

Future work will focus on extending our analysis to games with weaker regularity conditions that may admit multiple equilibria, as well as validating the framework against real-world data from domains like cybersecurity and smart grids. Ultimately, our framework provides a foundation for the principled engineering of resilient, adaptive holonic systems.







\bibliographystyle{ieeetr}
\bibliography{ref}

\end{document}